\title{Bounds on BDD-Based Bucket Elimination}
\author{Stefan Mengel}{Univ.~Artois, CNRS, Centre de Recherche en Informatique de Lens (CRIL)}{}{}{}
\authorrunning{S.~Mengel}
\keywords{Bucket Elimination, Binary Decision Diagrams, Satisfiability, Complexity}
\newcommand{\OBDDsystem}{\text{OBDD}(\land, \exists)}
\begin{document}

\maketitle

\begin{abstract}
We study BDD-based bucket elimination, an approach to satisfiability testing using variable elimination which has seen several practical implementations in the past. We prove that it allows solving the standard pigeonhole principle formulas efficiently, when allowing different orders for variable elimination and BDD-representations, a variant of bucket elimination that was recently introduced. Furthermore, we show that this upper bound is somewhat brittle as for formulas which we get from the pigeonhole principle by restriction, i.e., fixing some of the variables, the same approach with the same variable orders has exponential runtime. We also show that the more common implementation of bucket elimination using the same order for variable elimination and the BDDs has exponential runtime for the pigeonhole principle when using either of the two orders from our upper bound, which suggests that the combination of both is the key to efficiency in the setting.
\end{abstract}

\section{Introduction}

We analyze several aspects of a simple approach to propositional satisfiability called \emph{bucket elimination} based on binary decision diagrams (BDDs)~\cite{Bryant86}. It was originally introduced by Pan and Vardi~\cite{PanV05}, and works, given a CNF $F$, as follows: first translate all clauses of $F$ into BDDs, all having the same variable order. Then, along another variable order, conjoin all BDDs that contain the current variable $x$, eliminate $x$ in the result of the conjoin operation  by existential quantification, add the resulting BDD to the current set of BDDs, and finally delete all BDDs containing $x$. The end result is a BDD representing one of the constants $1$ or $0$, depending on if $F$ is satisfiable or not. The algorithm is often described by putting the BDDs in \emph{buckets} treated in the variable order as in the pseudocode Algorithm~\ref{alg}. It is not hard to see that this approach decides satisfiability of all CNF-formulas correctly. We remark in passing that bucket elimination has also been used as a general approach for reasoning in artificial intelligence~\cite{Dechter99}. In particular, in the context of propositional satisfiability one can implement ordered resolution, also called Davis-Putnam resolution~\cite{DavisP60}, with it, which leads to an algorithm that is similar to what we described above~\cite{RishD00} but uses CNF-formulas to represent intermediate results and not BDDs. In the remainder, we will only focus on bucket elimination that is based on BDDs.

Several SAT-solvers using bucket elimination have been implemented~\cite{PanV05,JussilaSB06,BryantH21}, also motivated by a relation to extended resolution which allows extracting clausal refutations of CNF-formulas efficiently from runs of bucket elimination. 

In this paper, we aim to get a theoretical understanding of the strength of bucket elimination. We first prove that the approach is powerful enough to efficiently solve the well-known pigeonhole principle formulas $\text{PHP}_n$ which are hard for other techniques, in particular resolution~\cite{Haken85}. Our bound confirms recent experimental results for a different encoding that was specifically chosen to make the algorithm efficient~\cite{CodelRHB21,CodelRHB21b}. We here show that also for the standard encoding, there is a choice of variable orders with which bucket elimination can efficiently solve pigeonhole principle formulas.

We then go on showing that the upper bound for $\mathsf{PHP}_n$ is in a sense brittle: one can restrict the formula $\text{PHP}_n$ by assigning some of its variables, resulting in a formula on which bucket elimination with the same variable orders as before takes exponential time. This is surprising since fixing some of the variables reduces the search space and thus should make the problem easier. However, in the case of bucket elimination it has the opposite effect, making the runtime explode. 
This suggests that bucket elimination is not very stable under small variations of the input.

The final part of this paper is motivated by the fact that the pigeonhole principle has been used as a benchmark also  in~\cite{PanV05,BryantH21} where bucket elimination was shown to be practically inefficient. The difference between our result and~\cite{CodelRHB21b} on the one hand and~\cite{PanV05,BryantH21} on the other hand is that the latter, as also the implementation of~\cite{JussilaSB06}, consider the same  variable order for the variable elimination and the order in the BDDs. In contrast, in our result and the current public version of the implementation of~\cite{BryantH21,CodelRHB21,CodelRHB21b}\footnote{\url{https://github.com/rebryant/pgbdd}} two different orders may be chosen. To explore the impact of this change, we consider bucket elimination for $\text{PHP}_n$ where only one of the variable orders we use in our upper bound is used. We show that in both cases the variant that uses only one order has exponential runtime, which shows that to efficiently solve $\text{PHP}_n$ the combination of the two orders is crucial and is more powerful than each of them individually.


Our results can also be seen in the context of BDD-based proof systems, more specifically, they are close to results on the proof system $\OBDDsystem$~\cite{AtseriasKV04} which allows general conjunction and variable elimination without any scheduling restrictions. It was shown in~\cite{ChenZ09} that there are polynomial size refutations of the pigeonhole principle in this system. This also follows from our result, which can be interpreted as working in a restricted fragment of $\OBDDsystem$. We remark also that \cite{BussIKRS21} claims that the proofs in~\cite{ChenZ09} can be implemented in the algorithm of~\cite{PanV05}. However, this seems to be not the case due to the order restrictions of that algorithm which are not respected in the proof. It is however possible that, after rearranging the operations, the proof in~\cite{ChenZ09} could be implemented with two orders, similarly to our result.

\begin{algorithm}[t]
	\SetAlgoLined
	\LinesNumbered
	\SetAlgoNoEnd
	\DontPrintSemicolon
	\KwIn{clauses $C_1, \ldots, C_m$ in variable set $X$, elimination variable order $\pi$}
	\For{$x\in X$}{
		create empty bucket $B_x$
	}
	\For{$i= 1, \ldots, m$} {
		compute a BDD for $C_i$, put it into $B_y$ where $y$ is first variable in order $\pi$ in $C_i$
	}
	\For{$x\in X$ in order $\pi$}{
		compute BDD $D$ by iteratively conjoining all BDDs in $B_x$\;\label{ln:conjoin}
		\If{$D$ is constant $0$-BDD} {\Return 0}
		compute a BDD $D'$ computing $\exists x D$ and put it into $B_y$ where $y$ is first variable in order $\pi$ in $D'$\label{ln:project}
	}
	\Return{1}
	\caption{BDD-based bucket elimination for CNF-formulas}
	\label{alg}
\end{algorithm}

\section{Preliminaries}

We use the usual integer interval notation, e.g.~$[n]:=\{1, \ldots, n\}$ and $[m,n]:= \{m, m+1, \ldots, n-1, n\}$. When speaking of graphs, we mean finite, simple, undirected graphs. We write $G=(A,B,E)$ for a bipartite graph with color classes $A$ and $B$ and edge set $E$. 

We assume that the reader is familiar with the basics of propositional satisfiability, in particular CNF-formulas, see e.g. the introductory chapters of~\cite{handbook21}. Given a CNF-formula $F$ and a partial assignment $a$, we call the \emph{restriction} of $F$ by $a$ the CNF which we get by fixing the variables according to $a$ and simplifying, i.e., we delete all clauses that are satisfied by $a$ and from the other clauses all literals that are falsified.


An \emph{(ordered) binary decision diagram} (short \emph{BDD} or OBDD) is a graph-based representa\-tion of Boolean functions as follows~\cite{Bryant86}: a BDD over a variable set $X$ consists of a directed acyclic graph with one source and two sinks. The sinks are labeled $0$ and $1$, respectively, while all other nodes are labeled by variables from $X$. Every node but the sinks has two out-going edges, called $0$-edge and $1$-edge, respectively. Given an assignment $a$ to $X$, we construct a source-sink path in the BDD starting in the source and iteratively following the $a(x)$-edge to the next node, where $x$ is the label of the current node. Eventually, we end up in a sink whose label is the value computed by the BDD on $a$. This way, the BDD specifies a Boolean value for every assignment to $X$ and thus defines a Boolean function. BDDs are required to be ordered as follows: there is an order $\pi$ on $X$ such that whenever there is an edge from a node labeled by $x$ to a node labeled by $y$, then $x$ appears before $y$ in $\pi$. It follows that on every source-sink path one encounters every variable at most once.

It will sometimes be convenient to reason with complete BDDs which are BDDs in which all source-sink paths contain all variables as labels. The \emph{width} of a complete BDD is defined as the maximal number of nodes that are labeled by the same variable. Clearly, a complete BDD in $n$ variables and of width $w$ has size at most $O(nw)$. Moreover, it is well known that when conjoining two BDDs with the same variable order and width $w_1$ and $w_2$, respectively, the result has the same order and width at most $w_1\cdot w_2$.


We will use the following known lower bound, see e.g.~\cite[Section~6]{AmarilliCMS20}; for the convenience of the reader, we give a self-contained proof in the appendix.
\begin{lemmarep}\label{lem:lowerscov}
	Every BDD computing $\bigwedge_{i\in [n]}x_i\lor y_i$ with a variable order in which every $x_i$ comes before every $y_j$ has at least $2^n$ nodes.
\end{lemmarep}
\begin{proof}
	Consider the partition $(X,Y)$ where $X$ contains all the $x_i$ and $Y$ all the $y_i$. We start by presenting a known connection to so-called rectangle covers. A \emph{rectangle} respecting the partition $(X,Y)$ is a function $f(X,Y)$ that can be written as a conjunction
	\begin{align*}
		f(X,Y):= f_1(X)\land f_2(Y).
	\end{align*}
	All rectangles we consider here will respect $(X,Y)$, so we do not mention it here explicitly in the remainder. We say that an assignment $a$ \emph{lies in the rectangle} $f$, if $f(a)= 1$. A \emph{rectangle cover} of a function $f(X,Y)$ is a sequence $f^1, \ldots, f^s$ such that
	\begin{align}
		f(X,Y) = \bigvee_{i\in [s]} f^i(X,Y)
	\end{align}
	where the $f^i$ are all rectangles. The size of the rectangle cover is defined to be $s$. Rectangles are connected to BDDs due to the following result:
	\begin{claim}\label{clm:rectangles}
	If a function $f$ can be represented by a BDD of size~$s$ with a variable order in which all variables in $X$ appear before those in $Y$, then there is a rectangle cover of $f$ of size~$s$.
	\end{claim}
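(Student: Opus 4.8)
The plan is to use the order restriction to cut every accepting path of the BDD into an ``$X$-part'' followed by a ``$Y$-part'' and read off one rectangle per node at which such a cut can occur. \emph{Step 1 (locate the cut):} for a path $p$ from the source to the $1$-sink, let $\mathrm{cut}(p)$ be the first node on $p$ whose label is not a variable of $X$; this is well defined since $p$ ends at the $1$-sink, which is not an $X$-node. As the variable order places all of $X$ before all of $Y$, no edge goes from a $Y$-node to an $X$-node, so $p$ consists of a (possibly empty) prefix of $X$-nodes followed by a suffix of $Y$-nodes and the $1$-sink; thus $\mathrm{cut}(p)$ is exactly the node where this transition happens, and it is either a $Y$-node or the $1$-sink. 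Let $C$ be the set of nodes arising as $\mathrm{cut}(p)$ for some accepting path $p$, so $|C|$ is at most the number of nodes, i.e.\ at most $s$.

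\emph{Step 2 (define the rectangles):} for $v\in C$, let $f_1^v(X)$ be the indicator of the event that the run of the BDD on an assignment $a\in\{0,1\}^X$, started at the source, branching according to $a$ at every $X$-node, and stopped as soon as a non-$X$-node is reached, ends at $v$; and let $f_2^v(Y)$ be the indicator that the sub-BDD rooted at $v$ accepts $b\in\{0,1\}^Y$. Since $f_1^v$ depends only on $X$ and $f_2^v$ only on $Y$, the function $g^v:=f_1^v\land f_2^v$ is a rectangle respecting $(X,Y)$.

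\emph{Step 3 (verify the cover):} fix a total assignment $(a,b)$. Because the BDD is deterministic, its computation on $(a,b)$ is a single path, which runs through $X$-nodes following $a$ until it reaches a unique node $v^\ast\in C$ and then runs through $Y$-nodes following $b$. Hence $f_1^v(a)=1$ exactly for $v=v^\ast$, and $f(a,b)=1$ iff this path reaches the $1$-sink iff $f_2^{v^\ast}(b)=1$; therefore $\bigvee_{v\in C} g^v(a,b)=g^{v^\ast}(a,b)=f(a,b)$. This is a rectangle cover of $f$ of size $|C|\le s$, and one may pad with copies of an arbitrary rectangle to reach size exactly $s$.

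The statement is essentially bookkeeping around the order restriction, so I do not expect a real obstacle; the points to be careful about are (i) that the $f_1^v$ still partition $\{0,1\}^X$ when the BDD is not complete, where ``skipping'' a variable along a path simply means not branching there, and (ii) the degenerate cases where $f$ does not depend on some variable of $X$ (then the source itself may lie in $C$) or of $Y$ (then $C$ consists of sinks, and the $0$-sink contributes the empty rectangle, since $f_2^{0\text{-sink}}\equiv 0$). Neither affects the argument.
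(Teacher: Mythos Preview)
Your argument is correct and follows essentially the same approach as the paper: both identify the ``frontier'' nodes where paths leave the $X$-layer and enter the $Y$-layer, and associate to each such node the rectangle obtained by combining source-to-node paths (an $X$-condition) with node-to-$1$-sink paths (a $Y$-condition). The only cosmetic difference is that the paper takes as frontier all non-$X$ nodes with an incoming edge from an $X$-node (handling the case where the source is already non-$X$ separately), whereas you take cut points of accepting paths; your Step~3 tacitly assumes $v^\ast\in C$, which can fail for rejecting inputs, but then all $f_1^v(a)$ vanish and $f(a,b)=0$, so the identity still holds.
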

	\begin{claimproof}
		Let $D$ be a BDD computing $f$.	Let $v_1, \ldots, v_\ell$ the nodes with a label not in $X$ such that there is an edge from a node with label in $X$ to $v_i$. Remember that every assignment $a$ to $(X,Y)$ induces a path through $D$. Let for every $i\in [\ell]$ the Boolean function $f^i$ be defined as the function accepting exactly the assignments $a$ accepted by $D$ and whose path leads through $a$. Then $f^i$ is a rectangle, since we can freely combine the paths from the source to $v_i$ with those from $v_i$ to the $1$-sink. Moreover, every assignment accepted by $D$ must be accepted by at least one $v_i$ since the $1$-sink is not labeled by $X$ but the source of $D$ is (ignoring trivial cases here in which $f$ does not depend on $X$ where the statement is clear because $f$ itself is a rectangle cover of itself). So we get that 
		\begin{align*}
			f(X,Y) = \bigvee _{i\in [\ell]} f^i(X,Y)
		\end{align*}
	and thus $f$ has a rectangle cover of size $\ell$. Observing that $D$ has at least the nodes $v_1, \ldots, v_\ell$ which are all different and thus at least size $\ell$, completes the proof of the claim.
	\end{claimproof}
	We will show a lower bound on the size of any rectangle cover of the function $f(X,Y) := \bigwedge_{i\in [n]}x_i\lor y_i$, using the so-called fooling set method as follows: consider the set $M$ of models of $f(X,Y)$ of Hamming weight exactly $n$. These models assign for each $i\in [n]$ exactly one of $x_i$ and $y_i$ to $1$. 
	
	\begin{claim}\label{clm:fooling}
		In any rectangle cover of $f$, no two assignments $a,b\in M$ with $a\ne b$ lie in the same rectangle.
	\end{claim}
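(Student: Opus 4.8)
The plan is to run the classical fooling-set exchange argument. Suppose toward a contradiction that there are distinct $a,b\in M$ that both lie in a single rectangle $g=g_1(X)\land g_2(Y)$ occurring in some fixed rectangle cover of $f$. Write $a=(a_X,a_Y)$ and $b=(b_X,b_Y)$ for the restrictions of $a$ and $b$ to $X$ and to $Y$, respectively.

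First I would record what membership in $g$ gives us: from $g(a)=g(b)=1$ we get $g_1(a_X)=g_2(a_Y)=1$ and $g_1(b_X)=g_2(b_Y)=1$. Hence the two ``mixed'' assignments $(a_X,b_Y)$ and $(b_X,a_Y)$ also satisfy $g$, so both lie in $g$. Since $g$ is one of the rectangles of a cover of $f$, i.e.\ $f=\bigvee_i f^i$ with $g$ among the $f^i$, any assignment lying in $g$ is a model of $f$; in particular $(a_X,b_Y)$ and $(b_X,a_Y)$ are models of $f$.

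Next I would exploit the shape of $M$ to contradict this. Every element of $M$ is a model of $\bigwedge_{i\in[n]}(x_i\lor y_i)$ of Hamming weight exactly $n$, so (at least one literal per clause true, $n$ clauses, total weight $n$) it sets exactly one of $x_i,y_i$ to $1$ for each $i\in[n]$. As $a\ne b$, they disagree at some index $i$, and the previous observation forces, up to symmetry, $a(x_i)=1,\ a(y_i)=0$ and $b(x_i)=0,\ b(y_i)=1$. Then the model $(b_X,a_Y)$ sets both $x_i$ and $y_i$ to $0$, falsifying the clause $x_i\lor y_i$, so it is not a model of $f$ after all — a contradiction (in the symmetric case one argues with $(a_X,b_Y)$ instead). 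This proves the claim; combined with $|M|=2^n$ it shows every rectangle cover of $f$ has at least $2^n$ rectangles, and then Claim~\ref{clm:rectangles} yields Lemma~\ref{lem:lowerscov}.

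I do not expect a real obstacle here. The only points that need a little care are the observation that lying in \emph{any} rectangle of a cover of $f$ already forces being a model of $f$ (this is precisely what turns the exchange into a contradiction), and selecting the correct one of the two mixed assignments according to how $a$ and $b$ differ at the index $i$; both are immediate once the ``exactly one literal per pair'' normal form of $M$ is in place.
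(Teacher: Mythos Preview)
Your argument is correct and is essentially the same fooling-set exchange argument as in the paper: pick an index where $a$ and $b$ differ, use the ``exactly one of $x_i,y_i$ is $1$'' structure of $M$, and then show that the appropriate mixed assignment lies in the rectangle yet falsifies $f$. The only cosmetic difference is that you make explicit the step that membership in a rectangle of the cover forces being a model of $f$, which the paper leaves implicit.
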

	\begin{claimproof}
		By way of contradiction, assume that there is a rectangle cover of $f$ such that there are $a,b\in M$ with $a\ne b$ that lie in the same rectangle $f^j$. Let $a_X$ be the restriction of $a$ to $X$ and $a_Y$ that to $Y$. Define $b_X$ and $b_Y$ analogously. Since $a$ and $b$ are not equal, there is an $i\in [n]$ where $a(x_i) \ne b(x_i)$ or $a(y_i)\ne b(y_i)$. Since we have by the choice of $M$ that $a(x_i) = \neg a(y_i)$ and similarly $b(x_i) = \neg b(y_i)$, we actually get that $a(x_i) \ne b(x_i)$ and $a(y_i)\ne b(y_i)$ are both true. It follows that $a(x_i) = b(y_i) \ne a(y_i) = b(x_i)$.
		
		Now assume w.l.o.g.~that $a(x_i) = b(y_i) = 0$. Then for $c=a_X \cup b_Y$ we have $f(c) = 0$. But since $f^j$ is a rectangle, we have that $f^j(c) = f^j_1(a_X)\land f^j(b_Y)= 1$ which is a contradiction. So $a$ and $b$ cannot lie in the same rectangle, as claimed.
	\end{claimproof}
	Note that for every assignment $a_X$ to $X$ there is an extension to $Y$ such that the resulting assignment $a$ is in $M$ (simply set for every $i\in [n]$ the missing value by $a(y_i):= \neg a(x_i)$). Thus, $M$ has size $2^n$. By Claim~\ref{clm:fooling} and Claim~\ref{clm:rectangles}, we get that any rectangle cover of $f$ and thus any BDD for $f$ has size $2^n$ which completes the proof.
\end{proof}

\section{A Polynomial Upper Bound for the Pigeonhole Principle}\label{sct:upper}

We consider SAT-encodings of pigeonhole problems on bipartite graphs $G=(A,B,E)$\footnote{We remark that the same formulas are called \emph{bipartite perfect matching benchmarks} in~\cite{CodelRHB21,CodelRHB21b}, but since the name \emph{perfect matching principle} is used for a related but different class of formulas in proof complexity~\cite{Razborov04}, we follow the notation from~\cite{Ben-SassonW01} here and speak of \emph{pigeonhole formulas} to avoid confusion.}. We assume that $|B|>|A|$, so there is no perfect matching in the graph. In the graphs we consider, we will have $A=[n]$ and $B=[n+1]$. We encode the non-existence of a perfect matching by generalizing the usual direct encoding of the pigeonhole principle: for every edge $ij\in E$, we introduce a variable $p_{i,j}$ which encodes if the edge $ij$ is put into a matching or not. 
For every $i\in A$, we encode by an at-most-one constraint
\begin{align*}
	\text{AMO}_i := \bigwedge_{j,k\in N(i), j\ne k} \bar p_{i,j} \lor \bar p_{i,k}, 
\end{align*}
that at most one vertex from $B$ is matched to $i$. Here, $N(i)$ is the neighborhood of $i$ in $G$, i.e., the set of vertices connected to $i$ by an edge.
For every $j \in B$, we add a clause
\begin{align*}
	\text{ALO}_j := \bigvee_{i\in N(j)} p_{i,j}
\end{align*}
encoding the fact that $j$ must be matched to one of its neighbors in $A$.

The pigeonhole formula for $G$ is then
\begin{align*}
	G\text{-PHP} := \bigwedge_{j\in B} \text{ALO}_j \land \bigwedge_{i\in A} \text{AMO}_i.
\end{align*}

We recover the usual pigeonhole problem formula $\text{PHP}_n$ by considering the complete bipartite graph $K_{n, n+1}=([n], [n+1], [n]\times [n+1])$. 
Conversely, we get $G\text{-PHP}$ from $\text{PHP}_n$ by the restriction that sets the variables $p_{i,j}$ for $ij\notin E$ to $0$.

It is useful to consider the variables $p_{i,j}$ of $\text{PHP}_n$ organized in a matrix where, as usual,~$i$ gives the row index while $j$ gives the column index. Note that with this convention, $\text{ALO}_j$ only has variables in column $j$ while $\text{AMO}_i$ only has variables in row $i$.

We consider two orders on the variables in $\text{PHP}_n$: the \emph{row-wise order}
\begin{align*}
	\pi_r := p_{1,1}, p_{1,2}, \ldots, p_{1,n+1}, p_{2,1}, \ldots p_{n, n+1}
\end{align*}
that we get by reading the variable matrix row by row and the \emph{column-wise order}
\begin{align*}
	\pi_c := p_{1,1}, p_{2,1}, \ldots, p_{n,1}, p_{1,2}, \ldots p_{n, n+1}
\end{align*}
that we get by reading the variable matrix column by column. We consider the same orders for subgraphs $G$ of $K_{n,n+1}$ by simply deleting the variables of edges not in $G$.

\begin{theorem}\label{thm:upperpigeons}
	Bucket elimination in which all BDDs have order $\pi_r$ and the elimination proceeds in order $\pi_c$ refutes $\text{PHP}_n$ in polynomial time.
\end{theorem}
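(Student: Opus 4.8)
The plan is to trace the run of the algorithm bucket by bucket: to determine exactly which BDDs sit in each bucket, to identify every one of them with a simple ``count the rows'' Boolean function, and to bound its width under the order $\pi_r$. Since the whole run uses only polynomially many conjoin and projection operations (on the order of $n^3$), a uniform polynomial bound on the width, hence the size, of every BDD that appears immediately gives a polynomial running time.

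First I would pin down the initial distribution of clauses. Reading the variable matrix columnwise, $\text{ALO}_j$ has all its variables in column $j$, so its $\pi_c$-first variable is $p_{1,j}$ and it goes into $B_{p_{1,j}}$; a clause $\bar p_{i,j}\lor\bar p_{i,k}$ of $\text{AMO}_i$ with $j<k$ has $\pi_c$-first variable $p_{i,j}$ and goes into $B_{p_{i,j}}$. Hence $B_{p_{1,j}}$ starts with $\text{ALO}_j$ and the two-clauses $\{\bar p_{1,j}\lor\bar p_{1,k}\mid k>j\}$, while $B_{p_{i,j}}$ with $i\ge 2$ starts with exactly $\{\bar p_{i,j}\lor\bar p_{i,k}\mid k>j\}$. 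As $\pi_c$ eliminates the variables column by column, I would prove by induction on $c$ that, once all variables of columns $1,\dots,c$ have been eliminated, the buckets of columns $c+1,\dots,n+1$ contain exactly their original clauses plus one extra BDD $E_c$, placed in $B_{p_{1,c+1}}$, where
\[
  E_c \;=\; \text{``at least } c \text{ of the rows } 1,\dots,n \text{ are all-zero in columns } c+1,\dots,n+1\text{''}
\]
(equivalently: the remaining pigeons $c+1,\dots,n+1$ occupy at most $n-c$ holes). Semantically $E_c$ is just the projection onto columns $>c$ of all clauses consumed so far, so it is the BDD incarnation of the counting argument behind $\text{PHP}_n$; in particular $E_n=\bigwedge_{i=1}^n\bar p_{i,n+1}$.

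The heart of the argument is the inner induction that processes column $c$ in the elimination order $p_{1,c},\dots,p_{n,c}$. Here I would show that after eliminating $p_{1,c},\dots,p_{t,c}$ the accumulated BDD equals
\[
  D_t \;=\; \exists p_{1,c}\cdots\exists p_{t,c}\;\Bigl[\,E_{c-1}\wedge\text{ALO}_c\wedge\bigwedge_{i=1}^{t}\bigwedge_{k=c+1}^{n+1}\bigl(\bar p_{i,c}\lor\bar p_{i,k}\bigr)\Bigr]
\]
(with $E_0$ read as the constant $1$); that $D_t$ is moved into $B_{p_{t+1,c}}$ for $t<n$ while $D_n$ is exactly $E_c$; and, the crucial point, that $D_t$ depends on its free variables only through a few counters: the number of rows $\le t$ that are all-zero in columns $>c$, the number of rows $>t$ with that property, the number of rows $>t$ whose column-$c$ entry is $1$, and the number of rows $>t$ that are all-zero in columns $>c$ and have column-$c$ entry $0$. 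A short case analysis (on whether some column-$c$ entry of a row $>t$ is forced to $1$) expresses $D_t$ as a Boolean combination of threshold conditions on these counters. Since $\pi_r$ reads the variables row by row, a complete BDD for any such function has width $O(n^2)$, and $E_c$ itself has width $O(n)$. To close the induction I would check that passing from $D_{t-1}$ to $D_t$ stays small: conjoining $D_{t-1}$ one clause at a time with the two-clauses in $B_{p_{t,c}}$ gives, at every stage, $D_{t-1}$ intersected with ``if the column-$c$ entry of row $t$ is $1$ then the already-read part of row $t$ is all-zero'', which at most doubles the number of nodes per level, and projecting out $p_{t,c}$ from the final conjunction then yields exactly $D_t$.

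Combining these, every BDD occurring in the run has polynomial width, hence polynomial size, and there are only polynomially many operations, so the algorithm runs in polynomial time; and it returns $0$ because in the last column $B_{p_{1,n+1}}$ contains $\text{ALO}_{n+1}=\bigvee_{i=1}^n p_{i,n+1}$ together with $E_n=\bigwedge_{i=1}^n\bar p_{i,n+1}$, whose conjunction is the constant-$0$ BDD. The main obstacle is exactly this inner induction: getting the bucket bookkeeping right, guessing the correct closed form for the intermediate BDDs $D_t$, and using it to control their width. The naive bound, that a projection $\exists x\,f=f|_{x=0}\lor f|_{x=1}$ can square the BDD size, is useless here, since iterating it over the $\Theta(n^2)$ projections of the run would only give a doubly exponential estimate.
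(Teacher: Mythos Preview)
Your proposal is correct and follows essentially the same route as the paper: both identify the BDD produced after quantifying $p_{i,j}$ with a threshold-style function on the remaining variable matrix (your $D_t$ and $E_c$ are the paper's $F'_{i,j}$ and $F'_{n,c}$, and your ``at least $c$ all-zero rows'' description of $E_c$ is exactly the $|I_{a'}|\ge c$ case of the paper's main claim), then use that characterization to bound the width under $\pi_r$ and separately argue that the clause-by-clause conjunctions within a bucket add only a constant factor to the width. Your two-level induction on columns and rows is just a repackaging of the paper's direct analysis of each $F'_{i,j}$; the only quantitative difference is that the paper squeezes the width down to $O(n)$ rather than your $O(n^2)$, but this is immaterial for the polynomial bound.
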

\begin{proof}
	We will polynomially bound the size of all BDDs constructed by the algorithm; since all BDD operations we use can be performed in time polynomial in the BDD size~\cite{Bryant86}, the result then follows directly. In this we tacitly also use the fact that all operations on BDDs we use the constructed BDDs can be assumed to be a minimal size for the variable order due to canonicity of BDDs. We first analyze the BDDs that result from the respective quantification steps (Line~\ref{ln:project} in Algorithm~\ref{alg}). 
	We denote by $F'_{i,j}$ the function computed by the BDD in which we quantify $p_{i,j}$. By $F_{i,j}$ we denote the CNF formula that is the conjunction of all clauses that have been conjoined before this elimination step.
	Observe that we get $F_{i, j}'$ from $F_{i,j}$ by quantifying all variables up to $p_{i,j}$ in $\pi_c$. Moreover, if $p_{i,j}$ is before $p_{i', j'}$ in $\pi_c$, then the clauses in $F_{i,j}$ are a subset of those in $F_{i', j'}$.

	$F_{i,j}$ consists of all clauses of $\text{PHP}_n$ that have a variable up to $p_{i,j}$ in the order $\pi_c$, so
	\begin{enumerate}
		\item the clauses $\text{ALO}_k$ for all $k\le j$,\label{pigeon}
		\item the clauses $\bar p_{i', k} \lor \bar p_{i',\ell}$, for $i' \in [n]$ and $1\le k< \ell \le n+1$, $k < j$, and \label{holecomplete}
		\item the clauses $\bar p_{i',j} \lor \bar p_{i', k}$ for $i' \in [i]$ and $k \in [n+1], k\ne j$.\label{holepartial}
	\end{enumerate}
	Remember that we get $F_{i,j}'$ from $F_{i,j}$ by quantifying the variables up to $p_{i,j}$ in the order $\pi_c$. We will show that $F_{i,j}'$ can be encoded by a small BDD. We first consider the case $j=1$.
	
	\begin{claim}
	An assignment $a'$ satisfies $F_{i,1}'$ if and only if 
	\begin{itemize}
	\item $a'$ sets one of the $p_{i', 1}$ with $i'\in [i+1,n]$ to $1$, or 
	\item there is an $i^*\in [i]$ such that all $p_{i^*,j}$ with $j\in [2,n+1]$ take the value $0$ in $a'$.
	\end{itemize}
	\end{claim}
\begin{claimproof}
 Assume first that there is a $p_{i', 1}$ with $i'\in [i+1,n]$ set to $1$ by $a'$. We extend $a'$ to an assignment $a$ of $F_{i,1}$ by setting all quantified variables $p_{k,1}$ for $k\in [i]$ to $0$. Then $\text{ALO}_1$ is satisfied by $p_{i', 1}$ and the $p_{k, 1}$ for $k\in [i]$ satisfy the clauses in \ref{holepartial}. Since $F_{i,1}$ does not have clauses from \ref{holecomplete}, $F_{i,1}$ is satisfied by $a$ and thus $a'$ satisfies $F_{i,1}'$. If there is an $i^*\in [i]$ such that all $p_{i^*,j}$ with $j\in [2,n+1]$ take the value $0$, then set $a(p_{i^*, 1}):= 1$ and $a(p_{k, 1}):= 0$ for all other $k\in [i], k\ne i^*$. As before, all clauses of $F_{i,1}$ are satisfied and thus $F_{i,1}'$ is satisfied by $a'$. 
 
 For the other direction, assume that $a'$ satisfies $F_{i,1}'$ and that $a$ is an extension of $a'$ that satisfies $F_{i,1}$. If there is an $i^*\in [i]$ such that all $p_{i^*,j}$ with $j\in [2,n+1]$ take the value $0$, then there is nothing to show. So assume that for every $k\in [i]$ there is a $j'\in [2, n+1]$ such that $a'(p_{k,j'})= 1$. Since $F_{i,1}$ contains the clause $\bar p_{k,1} \lor \bar p_{k, j'}$, we have $a(p_{k,1})= 0$. Since this is true for all $k\in [i]$ and $a$ satisfies $\text{ALO}_1$, there must be $i'\in [i+1, n]$ which is set to $1$ by $a'$ which completes the proof of the claim. 
\end{claimproof}
	It follows that $F_{i,1}$ can be expressed as a small BDD with variable order $\pi_r$: check for every fixed $i'\in [i]$ if the value of all $p_{i,j}$ is $0$. Since, for every $i\in [n]$, these variables are consecutive in $\pi_r$, this can be easily done by a BDD that is essentially a path. We then glue these BDDs in increasing order of $i'$ in the obvious way and check for $i'\in [i+1,n]$ if $p_{i',1}$ takes value $1$ to get a BDD for $F_{i,1}$ of size $O(n^2)$, since we have to consider $O(n^2)$ variables.

	We now consider the case $j>2$. In that case, $F_{i,j}$ contains all variables of $\text{PHP}_n$. Note that if $j=n+1$, then $F_{i,j}= \text{PHP}_n$ and thus the formula $F_{i,j}'$ is unsatisfiable and has a constant size encoding as a BDD. So assume in the remainder that $j\le n$. Consider an assignment $a'$ to $F_{i,j}'$. Let $I_{a'}$ be the set of indices $i'\in [n]$ such that for all $j'$ for which $p_{i', j'}$ appears in $F_{i,j}'$ we have $a'(p_{i',j'})= 0$.
	\begin{claim}\label{clm:generalpigeon}
		$a'$ satisfies $F_{i,j}'$ if and only if
		\begin{enumerate}
			\item $|I_{a'}| \ge j$ and there is an index $i'\in I_{a'}\cap [i]$, or\label{en:case1}
			\item $|I_{a'}| \ge  j-1$ and there is an $i^*\in [i+1, n]\setminus I_{a'}$ such that $a'(p_{i^*, j})=1$.\label{en:case2}
		\end{enumerate}
	\end{claim}
	\begin{claimproof}
		Let first Case \ref{en:case1} be true. Then we can construct a injective function $f:[j]\rightarrow I_{a'}$ with $f(j)\in I_{a'}\cap [i]$. We construct an extension $a$ of $a'$ to all variables of $F_{i,j}$ as follows: for $j'\in [j]$ we set $a(p_{f(j'), j'}):= 1$ and set all other variables to $0$. Then for $j'\in [j]$, the clause $\text{ALO}_{j'}$ is satisfied by $p_{f(j'), j'}$. Let $V_{i,j}$ the variables of $a$ not assigned in $a'$. For every $i' \notin I_{a'}$, the variables $p_{i', k}\in V_{i,j}$ are assigned to $0$, so all clauses of the form $\bar p_{i', k} \lor \bar p_{i',\ell}$ in $F_{i,j}$ are satisfied. If $i'\in I_{a'}$, then, since $f$ is injective, at most one variable $p_{i', k}$ with $k\in [n+1]$ is assigned to $1$, so $a$ satisfies $\text{AMO}_i$ and thus in particular all clauses of the form $\bar p_{i', k} \lor \bar p_{i',\ell}$. Thus, $a$ satisfies $F_{i,j}$ and $a'$ satisfies $F_{i,j}'$.
		
		Now assume Case~\ref{en:case2} is true. Construct a injective function $f:[j-1]\rightarrow I_{a'}$. We again construct an extension $a$ of $a'$:  for $j'\in [j-1]$ we set $a(p_{f(j'), j'}):= 1$ and set all other variables in $V_{i,j}$ to $0$. We show that all clauses of $F_{i,j}$ are satisfied by $a$. First, $\text{ALO}_j$ is satisfied by $p_{i^*, j}$. For $j'\in [j-1]$, the clause $\text{ALO}_j$ is satisfied by $p_{f(j'), j'}$. For the binary clauses, we reason exactly as in the previous case. It follows that $a'$ satisfies $F_{i,j}$.
		
		For the other direction, assume that $a'$ satisfies $F_{i,j}'$ and let $a$ be an extension of $a'$ that satisfies $F_{i,j}$. Since $a$ must in particular satisfy the clauses $\text{ALO}_{j'}$ for $j'\in [j-1]$, we can choose, for every $j'\in [j-1]$, an index $f(j')\in [n]$ such that $a(p_{f(j'), j'})= 1$. All variables in the clauses $\text{ALO}_{k}$ for $k\in [j-1]$ are in $V_{i,j}$, so all binary clauses $p_{i', k} \lor \bar p_{i',\ell}$ with $k\in [j-1]$ appear in $F_{i,j}$. So in particular, there cannot be $k\in [j-1], \ell\in [n+1]$ such that $a(p_{f(k), k}) = a(p_{ f(k), \ell})=1$. It follows that $f$ is injective and for every $k\in [j-1]$ we have that $f(k)\in I_{a'}$. It follows that $|I_{a'}| \ge j-1$.
		
		Now assume that Case~\ref{en:case1} is false. Say first that $|I_{a'}|\ngeq j$, so $|I_{a'}|=j-1$. Then $f$ is a bijection. The clause $\text{ALO}_j$ is satisfied by $a$, so there must be $i^*\in [n]$ such that $a(p_{i^*, j})=1$. We claim that $i^*$ cannot be in $I_{a'}$. By way of contradiction, assume this were wrong. Then, because $f$ is a bijection, there is $j'\in [j-1]$ with $f(j')=i^*$. By construction of $f$, we have $a(p_{f(j'), j'})= a(p_{i^*, j'})=1$. Then, because $F_{i,j}$ contains the clause $\bar p_{i^*, j'} \lor \bar p_{i^*, j}$, the assignment $a$ does not satisfy $F_{i,j}$ which is a contradiction. So $i^*\notin I_{a'}$.  Moreover, $i^*> i$ since otherwise all binary clauses $\bar p_{i^*, j} \lor \bar p_{i^*,\ell}$ would be in $F_{i,j}$ and thus $i^*$ would be in $I_{a'}$. So in this case we have that Case~\ref{en:case2} is true.
		
		If there is no index $i'\in I_{a'}\cap [i]$, then we claim that $\text{ALO}_j$ is satisfied by a variable $p_{i^*, j}$ for $i^*> i$: reasoning with the binary clauses similarly to before, whenever $a(p_{i', j})=1$ for some $i'\in [i]$, then $i'\in I_{a'}$. So none of the $p_{i', j}$ with $i'\in [i]$ satisfy $\text{ALO}_j$ and it is satisfied by some $p_{i^*, j}$ which appears in $F_{i,j}'$. Then $i^*\notin I_{a'}$ due to $a'(p_{i^*, j}) = 1$, so Case~\ref{en:case2} is true.
	\end{claimproof}	
	
	With Claim~\ref{clm:generalpigeon}, we can bound the size of the BDD encoding $F_{i,j}'$: since for every $i'\in [n]$ the variables $p_{i',j'}$ are consecutive in the order $\pi_r$, we can check if $i'\in I_{a'}$ by a BDD of constant width. By making $j$ parallel copies of this BDD for every $i'$, we can compute the size of $I_{a'}$ cutting off at $j$ in width $O(j)$. We can also check if there is an index $i'\in I_{a'}\cap [i]$ or $i^*\in [i+1, n]\setminus I_{a'}$ such that $a'(p_{i^*, j})=1$ with only a constant additional factor. Since $F_{i,j}'$ has $O(n(n-j))$ variables, the overall size of the BDD computing $F_{i,j}'$ is $O(j(n-j)n)= O(n^3)$.
	
	It remains to bound the size of BDDs we get from the conjoin-steps between quantification steps. So consider a conjoin step before quantifying $p_{i,j}$ but after the potential previous quantification. Call the resulting BDD $D$ and let $D'$ be the BDD we got from the previous quantification (if there is no previous quantification, set $D'$ to the constant $1$ BDD).
	
	\begin{claim}\label{clm:apply}
		The size of $D$ is $O(n^3)$.
	\end{claim}
	\begin{claimproof}
		We first claim that when we start conjoining the BDDs in the bucket of variable $p_{i,j}$, the only BDD that does not encode a clause is $D'$. This is because after every quantification step the result contains the next variable in the order $\pi_c$. Thus, since we conjoin only BDDs that contain the variable $p_{i,j}$, the BDDs involved in these steps are $D'$ and potentially BDD representations of $\text{ALO}_j$ and clauses $\bar p_{i,j} \lor \bar p_{i,k}$ for $k>j$. First assume that the conjunction only involves clauses $\bar p_{i,j} \lor \bar p_{i,k}$ for $k>j$. We claim that the result then has size $O(d)$ where~$d$ is the number of conjuncts involved. To see this, observe that if $p_{i,j}$ takes value $0$, then all clauses in the conjunction are true, so the conjunction evaluates to true as well so we can directly go to the $1$-sink. If $p_{i,j}$ takes value $1$, then we have to verify if all other $p_{i,k}$ involved in the conjunction are $0$ which can be done by a path of length $d$ because $p_{i,j}$ is the first variable in $\pi_r$. 
		
		If the conjunction also involves $\text{ALO}_j$, then if $p_{i,j}$ takes value $1$, we proceed as before since $\text{ALO}_j$ is satisfied already. For the case where $p_{i,j}$ takes value $0$, we have to check all other variables in $\text{ALO}_j$ on a path. $\text{ALO}_j$ is only conjoined if $i=1$, so in that case $p_{i,j}$ is again the first variable to consider, so this procedure can be done following the order $\pi_r$. Overall, the conjunction in this case has size $O(n)$. Note also that in all cases discussed so far, we can also represent the conjunction by a BDD of constant width.
		
		It remains to consider the case in which $D'$ is involved in the conjunctions. We can then see the conjunction as one of several clauses, as discussed above, and $D'$. As shown above,~$D'$ has width $O(j)= O(n)$, so this is also true for the result $D$ of conjoining some of the clauses, since the latter contribute only constant width. So $D$ has a BDD of size $O(n^3)$.
	\end{claimproof}
	
	We have shown that all BDDs that we ever construct in the refutation have size at most $O(n^3)$. We make $O(n^3)$ conjoin operations and $O(n^2)$ quantifications and all BDD-operations can be performed in time polynomial in the input, so the overall runtime is polynomial.
\end{proof}

\section{No Closure Under Restrictions}

We next show that Theorem~\ref{thm:upperpigeons} is not true for restrictions of $\text{PHP}_n$.
To this end, we consider the graph $G= ([2n], [2n+1], E)$ where the edge set $E$ is defined by 
\begin{align*}
	E= \{(j, j), (n+j, j), (j, n+1+j), (n+j, n+1+j), (j, n+1), (n+j, n+1), \mid j\in [n]\}.
\end{align*}
\begin{theorem}\label{thm:uppermatchings}
	Bucket elimination in which all BDDs have order $\pi_r$ and the elimination proceeds in order $\pi_c$ refutes $G\text{-PHP}$ in time $\Omega(2^n)$.
\end{theorem}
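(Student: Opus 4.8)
The plan is to follow the run of bucket elimination on $G\text{-PHP}$ until all variables of the first $n+1$ columns of the variable matrix have been eliminated, and to show that the BDD produced at that moment already has at least $2^{n-1}$ nodes; since the algorithm must write this BDD down, the runtime bound $\Omega(2^n)$ follows.

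I would first record the ``gadget'' structure of $G$: for $j\in[n]$ the left vertices $j$ and $n+j$ both have neighborhood $\{j,\,n+1,\,n+1+j\}$, so, writing $a_j:=p_{j,n+1}$, $b_j:=p_{n+j,n+1}$, $u_j:=p_{j,n+1+j}$, $v_j:=p_{n+j,n+1+j}$, the only clause linking two distinct indices $j$ is $\text{ALO}_{n+1}=\bigvee_{j\in[n]}(a_j\lor b_j)$. Eliminating column $j$ along $\pi_c$ (the buckets of $p_{j,j}$ and then $p_{n+j,j}$) conjoins $\text{ALO}_j$ with the four $\text{AMO}$ clauses $\bar p_{j,j}\lor\bar a_j$, $\bar p_{j,j}\lor\bar u_j$, $\bar p_{n+j,j}\lor\bar b_j$, $\bar p_{n+j,j}\lor\bar v_j$ and projects out $p_{j,j},p_{n+j,j}$; a one-line computation shows the result is the constant-width BDD $H_j:=(\bar a_j\land\bar u_j)\lor(\bar b_j\land\bar v_j)$, which lands in the bucket of $a_j$. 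Eliminating column $n+1$ next (rows $1,\dots,2n$ in turn) absorbs $\text{ALO}_{n+1}$, all the $H_j$, and the remaining $\text{AMO}$ clauses $\bar a_j\lor\bar u_j$ and $\bar b_j\lor\bar v_j$; since $\text{ALO}_{n+1}$ mentions every one of $a_1,b_1,\dots,a_n,b_n$, these $2n$ projection steps form a single chain, whose last step outputs one BDD $D^\star$ equal to the existential over all $a_j,b_j$ of $\bigvee_{j}(a_j\lor b_j)\ \land\ \bigwedge_{j}\big(H_j\land(\bar a_j\lor\bar u_j)\land(\bar b_j\lor\bar v_j)\big)$. A short case distinction over $(u_j,v_j)$ evaluates this: for fixed $a_j,b_j$ the $j$-th factor is satisfiable only if $\neg(u_j\land v_j)$, and $a_j\lor b_j$ can be made true only when $u_j=v_j=0$, so
\[
 D^\star\ \equiv\ \bigwedge_{j\in[n]}(\bar u_j\lor\bar v_j)\ \land\ \bigvee_{j\in[n]}(\bar u_j\land\bar v_j).
\]

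Now I would restrict $D^\star$ by setting $u_1=v_1=0$: this trivializes $\bar u_1\lor\bar v_1$ and satisfies the big disjunction, leaving $\bigwedge_{j=2}^n(\bar u_j\lor\bar v_j)$ on the variables $u_2,\dots,u_n,v_2,\dots,v_n$. In $\pi_r$ each $u_j$ sits at the end of row $j$ and each $v_k$ at the end of row $n+k$, so $\pi_r$ restricted to these variables lists all $u_j$ before all $v_k$; flipping the $0$- and $1$-edges turns a BDD for $\bigwedge_{j=2}^n(\bar u_j\lor\bar v_j)$ into an equally large one for $\bigwedge_{j=2}^n(u_j\lor v_j)$, so by Lemma~\ref{lem:lowerscov} every BDD for the former with this order has at least $2^{n-1}$ nodes. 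Since restricting the order-$\pi_r$ BDD $D^\star$ by $u_1=v_1=0$ yields such a BDD of size at most $|D^\star|$, we get $|D^\star|\ge 2^{n-1}=\Omega(2^n)$. It remains to note that the algorithm does not return $0$ before producing $D^\star$: the CNF consisting of all clauses except $\text{ALO}_{n+1+j}$ ($j\in[n]$) is satisfiable, and those omitted clauses lie in buckets of columns $\ge n+2$ that are untouched until after column $n+1$, so up through column $n+1$ every conjoin step agrees with one in a run on a satisfiable formula and is never the constant-$0$ BDD.

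I expect the main work to be the bookkeeping in the second paragraph — verifying which clause or intermediate BDD lands in which bucket, so that eliminating columns $1,\dots,n$ really produces the $H_j$ and eliminating column $n+1$ really collapses into the single chain computing $D^\star$. Evaluating the existential quantifier over the $a_j,b_j$ and the reduction to Lemma~\ref{lem:lowerscov} are then routine.
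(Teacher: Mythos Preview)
Your proof is correct and follows essentially the same route as the paper: trace bucket elimination through columns $1,\dots,n+1$, identify the BDD produced there as (a restriction of) $\bigwedge_j(\bar u_j\lor\bar v_j)$ with all $u$'s preceding all $v$'s in $\pi_r$, and invoke Lemma~\ref{lem:lowerscov}. The only cosmetic difference is that the paper looks at the BDD $D$ just \emph{before} projecting $p_{2n,n+1}$ and simplifies it via the restriction $u_n=v_n=0,\,b_n=1$ together with a pure-literal argument, whereas you compute the post-projection BDD $D^\star$ explicitly and restrict by $u_1=v_1=0$; both routes land on the same $(n-1)$-clause hard function.
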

\begin{proof}
	We will show that bucket elimination constructs an exponential size BDD in its run. To this end, we first give all the clauses of $G\text{-PHP}$ (with the constraint names below):
	\begin{align*}
		&\bigwedge_{j\in [n]} \underbrace{\big((\bar p_{j,j}\lor \bar p_{j,n+1+j}) \land (\bar p_{j,j}\lor \bar p_{j,n+1}) \land (\bar p_{j,n+1}\lor \bar p_{j,n+1+j})\big)}_{\text{AMO}_j} 
		\\
		\land &\bigwedge_{j\in [n]} \underbrace{\big((\bar p_{n+j,j}\lor \bar p_{n+j,n+1+j}) \land (\bar p_{n+j,j}\lor \bar p_{n+j,n+1}) \land (\bar p_{n+j,n+1}\lor \bar p_{n+j,n+1+j})\big)}_{\text{AMO}_{n+j}}
		\\
		\land &\bigwedge_{j\in [n]} \big(\underbrace{(p_{j,j} \lor p_{n+j, j})}_{\text{ALO}_j} \land \underbrace{(p_{j, n+1+j} \lor p_{n+j, n+1+j})}_{\text{ALO}_{n+1+j}}\big) \land \underbrace{\bigvee_{j\in [n]} p_{j, n+1} \lor p_{n+j, n+1}}_{\text{ALO}_{n+1}}
	\end{align*}
We consider the step directly before the quantification of $p_{2n, n+1}$, so after conjoining the contents of $B_{p_{2n, n+1}}$ to a BDD $D$ in Line~\ref{ln:conjoin} in Algorithm~\ref{alg}. We claim that $D$ has exponential size. To this end, first observe that, at the time of the construction of $D$, all clauses have been joined except $\text{ALO}_{n+1+j}$ which contain no variables $p_{i,j'}$ with $j'\in [n+1]$. We claim that all these clauses have contributed to $D$. Indeed, whenever eliminating $p_{i,j}$ with $j\in [n]$, the result contains the variable $p_{i, n+1}$ and will thus be put into the bucket $B_{p_{i,n+1}}$ eventually. Then the clause $\text{ALO}_{n+1}$ makes sure that all these BDDs are (after some more conjoining and quantification) contributing to $D$. So we get $D$ by conjoining all clauses except the $\text{ALO}_{n+1+j}$ and eliminating all variables up to $p_{2n-1, n+1}$.

Let $F$ be the function we get from $D$ by fixing $p_{n, 2n+1}, p_{2n, 2n+1}$ to $0$ and $p_{2n, n+1}$ to $1$. Let $F'$ be the corresponding conjunction of clauses. Then $\text{ALO}_{n+1}$ is satisfied and the remaining literals $\bar p_{i, n+1}$ are all pure in $F'$. Thus, by pure variable elimination, an assignment $a$ to the variables $p_{j, n+1+j}, p_{n+j, n+1+j}$ for $j\in [n-1]$, which are the variables of $F$, can be extended to a satisfying assignment of $F'$ if and only if it can be extended to a satisfying assignment of 
	\begin{align*}
	\bar p_{2n, n} \land (p_{n,n}\lor p_{2n, n})\land \bigwedge_{j\in [n-1]} (\bar p_{j,j} \lor \bar p_{j,n+1+j}) \land (\bar p_{n+j, j}\lor \bar p_{n+j, n+1+j}) \land (p_{j, j}\lor p_{n+j,j}).
\end{align*}
Eliminating $p_{j,j}, p_{n+j,j}$ for $j\in [n]$, we see that $F$ is equivalent to 
	\begin{align*}
	\bigwedge_{j\in [n-1]}(\bar p_{j,n+1+j} \lor \bar p_{n+j, n+1+j}).
\end{align*}
When representing $F$ in a BDD with row-wise variable order, all $p_{j, n+1+j}$ are before all $p_{n+j, n+1+j}$, so we are, up to renaming literals which does not change the size of a BDD, in the situation of Lemma~\ref{lem:lowerscov}. We get that any BDD for $F$ with the order $\pi_r$ has size at least $2^{n-1}$ and, since fixing variables does not increase the size of BDD-representations, we get the same lower bound for $D$.
\end{proof}

\section{Lower Bounds for Single Orders}\label{sct:lower}

We will now analyze bucket elimination in which the elimination order is also the order in which variables appear in the BDDs. This is the behavior of the implementations of~\cite{PanV05,JussilaSB06}; the implementation of~\cite{BryantH21} allows the use of two orders in the current version.
In the variant with one order, which we call \emph{single-order bucket elimination}, it is always the variable in the source of the BDDs that is eliminated, which makes the algorithm simpler. 
We show here that neither of the two orders introduced in Section~\ref{sct:upper} leads to polynomial runtime behavior on its own, suggesting that it is the combination of both that is required for efficiency.

\begin{lemma}\label{lem:lowercol}
Single-order bucket elimination for $\text{PHP}_{n}$ with order $\pi_c$ constructs an intermedi\-ate BDD of size $2^{n}$.
\end{lemma}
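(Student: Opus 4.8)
Since in single-order bucket elimination both the BDD order and the elimination order are $\pi_c$, the first $n$ buckets processed are $B_{p_{1,1}}, B_{p_{2,1}}, \dots, B_{p_{n,1}}$, i.e.\ the variables of the first column. The plan is to track the single BDD that gets carried from one of these buckets to the next. One first checks the initial distribution of clauses: the $\pi_c$-minimal variable of $\text{ALO}_1$ is $p_{1,1}$, that of the clause $\bar p_{i,j}\lor\bar p_{i,k}$ (with $j<k$) is $p_{i,j}$, and that of every other clause lies outside column~$1$. Hence $B_{p_{1,1}}$ receives $\text{ALO}_1$ together with the clauses $\bar p_{1,1}\lor\bar p_{1,k}$ for $k\in[2,n+1]$, and for $i\ge 2$ the bucket $B_{p_{i,1}}$ receives exactly the clauses $\bar p_{i,1}\lor\bar p_{i,k}$ for $k\in[2,n+1]$.

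\textbf{The invariant.}
I will prove by induction on $t\in\{0,\dots,n\}$ that after processing $B_{p_{1,1}},\dots,B_{p_{t,1}}$ the algorithm has built a BDD, with order $\pi_c$, for
\[
G_t \;:=\; \bigvee_{s=1}^{t}\ \bigwedge_{k=2}^{n+1}\bar p_{s,k}\;\;\lor\;\;\bigvee_{i=t+1}^{n} p_{i,1},
\]
and that this BDD then sits in $B_{p_{t+1,1}}$ (or in $B_{p_{1,2}}$ if $t=n$), the rest of that bucket still consisting only of its original binary clauses; here $G_0=\text{ALO}_1$, so the case $t=0$ is the initial configuration described above. For the step, processing $B_{p_{t,1}}$ first conjoins $G_{t-1}$ with $\bigwedge_{k=2}^{n+1}(\bar p_{t,1}\lor\bar p_{t,k})$; the result is satisfiable, hence not the constant-$0$ BDD, so the algorithm does not stop. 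Since $G_{t-1}|_{p_{t,1}=1}\equiv 1$ (the disjunct $\bigvee_{i\ge t}p_{i,1}$ is satisfied) while $G_{t-1}|_{p_{t,1}=0}$ merely drops $p_{t,1}$ from that disjunct, quantifying $p_{t,1}$ gives $\bigwedge_{k=2}^{n+1}\bar p_{t,k}\;\lor\;G_{t-1}|_{p_{t,1}=0}=G_t$, whose $\pi_c$-minimal variable is $p_{t+1,1}$ (or $p_{1,2}$ when $t=n$). This closes the induction, and taking $t=n$ shows that the algorithm constructs a BDD with order $\pi_c$ for $G_n=\bigvee_{s=1}^{n}\bigwedge_{k=2}^{n+1}\bar p_{s,k}$.

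\textbf{Size lower bound.}
It remains to bound the size of any BDD for $G_n$ with order $\pi_c$. Negating a function does not change the size of its BDD representations (swap the two sink labels), so it suffices to bound $\neg G_n=\bigwedge_{s=1}^{n}\bigvee_{k=2}^{n+1}p_{s,k}$. Fixing every variable $p_{s,k}$ with $k\ge 4$ to $0$ (this set of variables is possibly empty, and restriction does not increase BDD size) leaves $\bigwedge_{s=1}^{n}(p_{s,2}\lor p_{s,3})$, and in $\pi_c$ every column-$2$ variable precedes every column-$3$ variable. Renaming $x_s:=p_{s,2}$ and $y_s:=p_{s,3}$ puts us exactly in the hypothesis of Lemma~\ref{lem:lowerscov}, which yields the bound $2^n$. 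Hence the BDD the algorithm constructs for $G_n$ has at least $2^n$ nodes.

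\textbf{Main obstacle.}
The delicate part is the bookkeeping in the invariant: one must verify that throughout the processing of column~$1$ no BDD other than the single carried-over $G_{t-1}$ ever enters a column-$1$ bucket, that no conjoin result is the constant-$0$ BDD (so that the algorithm does not terminate prematurely), and the slightly counterintuitive collapse $G_{t-1}|_{p_{t,1}=1}\equiv 1$ that makes the quantification step simplify so drastically. The concluding reduction to Lemma~\ref{lem:lowerscov} through negation and restriction is routine and parallels the argument already used in the proof of Theorem~\ref{thm:uppermatchings}.
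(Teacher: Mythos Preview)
Your proof is correct and follows essentially the same approach as the paper's: both identify the function computed after eliminating the first column (your $G_n$ is exactly the paper's $F'_{n,1}$), then negate, restrict the variables $p_{s,k}$ with $k\ge 4$ to $0$, and invoke Lemma~\ref{lem:lowerscov}. The only difference is that you carry out the induction over the column-$1$ buckets explicitly, whereas the paper simply cites the characterization of $F'_{n,1}$ already established in the proof of Theorem~\ref{thm:upperpigeons} (which applies because the function computed after each quantification depends only on the elimination order, not on the BDD order).
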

\begin{proof}
Consider the situation after we have eliminated the variables $p_{1,1}, p_{2,1}, \ldots, p_{n,1}$. As analyzed in the proof of Theorem~\ref{thm:upperpigeons}, after eliminating the last of these variables, we have constructed a BDD for the function $F_{n,1}'$ that is satisfied by an assignment if and only if there is an $i$ such that all $p_{i,j}$ with $j\in [2,n+1]$ take the value $0$. We claim that the BDD-representation of $F_{n,1}'$ has exponential size.

To show this, we consider the restriction $F$ of $F_{n,1}'$ that we get by fixing all variables $p_{i,j}$ for $j>3$ to $0$. Thus, $F$ has the variables $p_{1,2}, p_{1,3}, p_{2,2}, p_{2,3}, \ldots, p_{n, 2}, p_{n,3}$. We rename for all $i\in [n]$ the variables $p_{i,2}$ to $x_i$ and $p_{i,3}$ to $y_i$. The resulting function $F'$ evaluates to $1$ if and only if there is an $i$ such that $x_i$ and $y_i$ take the value $0$. Then the negation $\bar F$ of $F$ 
is given by
$\bar F = \bigwedge_{i\in [n]} x_i\lor y_i$.
Moreover, the variable order of the BDD we have to consider has all $x_i$ before any $y_i$, and thus, by Lemma~\ref{lem:lowerscov}, any BDD for $\bar F$ has size at least $2^n$. Since BDDs allow negation and restrictions without size increase, this shows the lower bound for $F_{n, 1}$ and thus the claim.
\end{proof}

\begin{lemma}
	Single-order bucket elimination for $\text{PHP}_{n}$ with order $\pi_r$ constructs an intermedi\-ate BDD of size $2^{n}$.
\end{lemma}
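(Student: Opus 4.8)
The plan is to mirror the proof of Lemma~\ref{lem:lowercol}, but with the analysis adapted to the row-wise order $\pi_r$: I would single out one BDD that single-order bucket elimination constructs while processing the first row of the variable matrix, and then lower-bound its size by a reduction to Lemma~\ref{lem:lowerscov}.

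Concretely, note that in the order $\pi_r$ the first variable of every clause $\text{ALO}_j$ is $p_{1,j}$, and the first variable of every clause $\bar p_{1,j}\lor\bar p_{1,k}$ of $\text{AMO}_1$ lies in row~$1$. Hence, by the time the algorithm finishes conjoining the bucket $B_{p_{1,n}}$ it has already conjoined all of $\text{ALO}_1,\dots,\text{ALO}_n$ and all of $\text{AMO}_1$ and has eliminated $p_{1,1},\dots,p_{1,n-1}$, whereas $\text{ALO}_{n+1}$ has not been touched, so $p_{1,n+1}$ is still a free variable. (One checks that the algorithm has not returned $0$ before this point, since $\text{AMO}_1\land\bigwedge_{k\in[n]}\text{ALO}_k$ is satisfiable, e.g.\ by the assignment that sets all of row~$1$ to $0$ and all of row~$2$ to $1$.) Let $D'$ be the BDD produced when $p_{1,n}$ is eliminated in Line~\ref{ln:project}; it has variable order $\pi_r$ and computes $\exists p_{1,1}\cdots\exists p_{1,n}\,(\text{AMO}_1\land\bigwedge_{k\in[n]}\text{ALO}_k)$ over the variables $p_{1,n+1}$ and $p_{i,k}$ with $i\in[2,n]$, $k\in[n]$.

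Now I would restrict $D'$ by setting $p_{1,n+1}:=1$: via the clauses $\bar p_{1,j}\lor\bar p_{1,n+1}$ of $\text{AMO}_1$ this forces $p_{1,j}=0$ for all $j\in[n]$, so the existential quantification collapses and the restricted BDD computes $\bigwedge_{k\in[n]}\bigvee_{i\in[2,n]}p_{i,k}$. Restricting further by $p_{i,k}:=0$ for all $i\in[4,n]$ and $k\in[n]$ yields $\bigwedge_{k\in[n]}(p_{2,k}\lor p_{3,k})$, and since in $\pi_r$ all variables $p_{2,k}$ come before all variables $p_{3,k}$, renaming $p_{2,k}$ to $x_k$ and $p_{3,k}$ to $y_k$ puts us exactly in the setting of Lemma~\ref{lem:lowerscov}. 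That lemma gives a lower bound of $2^n$ on the size of the restricted BDD, and since restricting a BDD never increases its size, $D'$ itself has size at least $2^n$, which proves the lemma.

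I expect the only delicate part to be the bookkeeping in the second paragraph: one has to verify carefully which clauses have been conjoined and which variables already eliminated at the exact moment $D'$ is formed --- in particular that processing buckets in the order $\pi_r$ funnels all of $\text{ALO}_1,\dots,\text{ALO}_n$ and all of $\text{AMO}_1$ through the first row before $p_{1,n}$ is eliminated, while leaving $\text{ALO}_{n+1}$ --- and hence the variable $p_{1,n+1}$ that the first restriction relies on --- untouched. Once this is pinned down, the reduction to Lemma~\ref{lem:lowerscov} is routine and parallels the proof of Lemma~\ref{lem:lowercol}.
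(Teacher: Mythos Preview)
Your proposal is correct and follows essentially the same approach as the paper. The only cosmetic difference is that the paper targets the BDD one step later (after also conjoining $\text{ALO}_{n+1}$ and eliminating $p_{1,n+1}$) and then restricts $p_{2,n+1}=p_{3,n+1}=0$ to force $j^*=n+1$, whereas you stop before $\text{ALO}_{n+1}$ is added and restrict $p_{1,n+1}:=1$; both restrictions, together with fixing rows $\ge 4$ to $0$, collapse the intermediate function to $\bigwedge_{k\in[n]}(p_{2,k}\lor p_{3,k})$ and invoke Lemma~\ref{lem:lowerscov} in the same way.
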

\begin{proof}
Consider the BDD $D$ we construct after eliminating the first row. The clauses that contribute to this function are all clauses of $\text{AMO}_1$ as well as all $\text{ALO}_j$ for all $j\in [n+1]$. When quantifying away the $p_{1,j}$ for $j\in [n]$, we get a function $F$ that is satisfied by an assignment $a$ if there is a $j^*\in [n+1]$ such that for all $j\in [n+1]\setminus \{j^*\}$ there is a $p_{i,j}$ set to $1$ by $a$; this is because all $\text{ALO}_j$ have to be satisfied and at most one of them can be satisfied by $p_{1,j}$ due to $\text{AMO}_1$. $F$ has to be represented by a BDD in bucket elimination and we will show that this requires exponential size.
To see this, fix all variables $p_{i,j}$ for $i>3$ to $0$ and fix $p_{2,n+1}$ and $p_{3,n+1}$ to $0$. The resulting function is 
$F' = \bigwedge_{j\in [n]} p_{2,j}\lor p_{3,j}$.
In the BDD-representation, all $p_{2,j}$ come before any $p_{3, j}$, so, 
up to renaming the variables $p_{2,j}$ to $x_j$ and $p_{3,j}$ to $y_j$, we are in the situation of  Lemma~\ref{lem:lowerscov} and, observing that fixing variables does not increase the size of a BDD, the lower bound follows from there.
%
%
\end{proof}

\section{Conclusion}

We have shown that bucket elimination based SAT-solving using BDDs can efficiently solve pigeonhole principle formulas, theoretically confirming prior experimental work from~\cite{CodelRHB21,CodelRHB21b,BryantH21}, which worked with a slightly different encoding. We have also seen that this result is not stable under restrictions, showing that, at least for the same orders, there are formulas we get by restriction of the pigeonhole principle that take exponential time to solve. We have also seen that the common single-order variant of bucket elimination~\cite{PanV05,JussilaSB06,BryantH21} has exponential runtime for the two variable orders that in combination allow efficient solving. 

For practical SAT-solving with BDD-based solvers, our results are mixed news: while we confirm that these solvers are in principle powerful in the sense that they can efficiently solve instances that are out of reach for resolution and thus CDCL-solvers, our additional results suggest that in general it might be hard to come up with the right two variable orders for the instances at hand, in particular since orders good for one type of formulas are bad for very related formulas. So it is not clear how useful BDD-based bucket elimination will be beyond very restricted formula classes.

We close the paper with some questions. First, it
would be interesting to understand if for every bipartite graph $G$ the formula $G$-PHP can be refuted efficiently by choosing orders adapted to the problem or if there are graphs for which bucket elimination is slow for all order choices. In particular, one might also consider some of the many different variants of the pigeonhole principle or mutilated chessboard formulas which have been considered extensively in the literature as benchmarks for solvers but also in theoretical work, see e.g.~the overview in~\cite{Razborov04}.

Finally, it is not clear if single-order bucket elimination can solve $\text{PHP}_n$ efficiently for some order. The experimental work in~\cite{Bryant86,CodelRHB21b} does not show any such order, and our own search in this direction has shown only lower bounds that are variants of those presented in Section~\ref{sct:lower}. It is thus natural to conjecture that in fact single-order variable bucket elimination cannot solve $\text{PHP}_n$ efficiently. Note that proving this would in particular show that two orders make the approach strictly more powerful.

\bibliographystyle{plain}
\bibliography{buckets}

\end{document}